\documentclass{article}
\usepackage{bm}
\usepackage[T2A]{fontenc}
\usepackage[utf8]{inputenc}
\usepackage[english]{babel}
\usepackage{amsmath,amssymb,amsfonts}
\usepackage[a5paper,nohead,headsep=0.6cm,left=1.5cm,
right=1cm,top=2cm,bottom=2cm,twoside]{geometry}
\usepackage{graphicx}

\newtheorem{theorem}{Theorem}

\begin{document}

\begin{center}
	\title{}{\bf AN ADDITIONAL POSSIBILITIES OF THE STANDARD METHOD
 OF INVERTING THE RADON TRANSFORM}
	
	\author{}{D.S. Anikonov}, {S.G. Kazantsev}, {D.S. Konovalova}
	
Sobolev Institute of Mathematics,  Novosibirsk, Russia

e-mail:anik@math.nsc.ru, kazan@math.nsc.ru, dsk@math.nsc.ru

\end{center}

 \begin{quote}
	\noindent{\bf Abstract. } The problem of inverting the Radon integral transform in
    finite-dimensional Euclidean space is considered. The relevance
    of this topic for probing issues is    indicated. It is noted that for the latter direction, it is
    natural to consider the integrand  as discontinuous function.
    However, the available inversion formulas are only proven for
    differentiable functions. Therefore, the question  of obtaining
    formulas for discontinuous functions arises. It is set that the
    required results can be obtained by some modification of available
    proofs for smooth functions.
	
	\noindent{\bf Keywords:} Radon transform, discontinuous functions,
tomography, integral geometry, probing.
\end{quote}

\section*{ Introduction and Statements}
The Radon transform is defined as integrals of a function over
hyperplanes in $n$-dimensional Euclidean space. This transform has
been used in many areas of mathematics. For example,  in the
theory of hyperbolic differential equations, it was possible to
reduce the dimension  of space to two. In this simple case, it was
possible to use explicit representations for solutions  of Cauchy
problems. However, such representations were obtained for the
images of the Radon transform.   Therefore, the next stage of the
work was the question of inverting the transform. These
mathematical   aspects have been investigated, for example, in
\cite{Courant:1962} -- \cite{Natterer:1986}. It is important to keep
in mind that the obtained inversion formulas were proven only for
smooth functions. Another application of the Radon transform
relates to the theory of probing unknown media with various
physical signals. In particular,   the mathematical part of X-ray
tomography theory is reduced to the inverse Radon transform for
$n=2$ and $n=3$. Furthermore, the Radon transform is widely used
in seismic exploration, geophysical mapping, defectoscopy, and
environmental monitoring. It is emphasized that for probing, it is
natural to consider the sought characteristics as discontinuous
functions. This is the way to describe inhomogeneous media
consisting of different materials. In this paper we prove formulas
for the inversion of the Radon transform for discontinuous
functions.

Below we present a short review of works closed to us. In the
articles \cite{K:2020} -- \cite{OZCC:2024}, numerical methods are
used to study practically important visualization problems,
inverse problems of seismic exploration in layered media of
various dimensions, and others. It is natural to look for a
solution to such problems in the class of piecewise continuous
functions, which is what the authors do. These studies also use
the discrete Radon transform and its inversion. The question of
the accuracy of such discretization is studied.

The present work is a continuation of results published in
\cite{AKK:2023} -- \cite{AKo:2024}.
Mainly in these  works the surfaces of discontinuous integrand were
determined with respect to generalized Radon transforms. Also the according algorithms were presented. Unlike
the previous results, here we determine the values of integrand in
points of its continuity for classical Radon transform.

Generally speaking, theory of Radon transform is growing in many
directions. In particular, we note to interesting article
 devoted to vector
and tensor Radon transforms  \cite{L:2024,Ky:2025}.

In the present work  a slight modification of available
proofs allows  us to obtain explicit inversion formulas for
discontinuous integrand. There are grounds to believe that the
application of our formulas will noticeably expand the
possibilities of using corresponding algorithms.

\section{Definition and preliminaries}

We will use the following notations. $\mathbb{E}_n$ means  an
$n$-dimensional Euclidean space; ${\bf e}_{1}, \ldots,{\bf e}_{n}$
is the main orthonormal basis in $\mathbb{E}_n$; points ${\bf x},
{\bf y}, { \bm\omega}$  in the space $\mathbb{E}_n$ in the main
basis are represented as ${\bf x}=\left(x_{1}, \ldots,
x_{n}\right)$, ${\bf y}=\left(y_{1}, \ldots, y_{n}\right)$, ${\bm
    \omega}=\left(\omega_{1}, \ldots, \omega_{n}\right)$; $\Omega$
denotes the unit sphere in $\mathbb{E}_n$; ${\bm \omega}$ is an
element of the sphere in $\mathbb{E}_n$, $\Omega=\{{\bm \omega}:
{\bm \omega } \in \mathbb{E}_n,|{\bm \omega}|\}=1\}$; $Y({\bm
    \omega}, p)=\{{\bf y}: {\bf y}\cdot{\bm \omega}=p\}$, $ {\bf y}
\in \mathbb{E}_n$, ${\bm \omega}\in \Omega$, $-\infty < p <
\infty$ represents a hyperplane in $\mathbb{E}_n$; $\partial T$
denotes the boundary of the set  $T \subset\mathbb{E}_n$.

Let $G \subset\mathbb{E}_n$ be a bounded region, containing
pairwise disjoint regions $G_{i}$, $i=1, \ldots, N$. Denoting the
union of these subsets by $G_0=\cup_{i=1}^{N}G_{i}$, we require
that $\overline{G}_{0} = \overline{G}$ and  assume that the
boundary of the set $G_0$ has zero measure in the space
$\mathbb{E}_n$.

Define now  the class of functions $F$. Function $f({\bf y})$, ${\bf y} \in \mathbb{E}_n$, belongs to
$F$, if the following conditions are satisfied:
$|f({\bf x}) - f({\bf y})|
\leq \textit{const}\, |{\bf x}-{\bf y}|^{\alpha}$, ${\bf x}, {\bf y} \in G_i$, $0< \alpha \leq 1$ and $f({\bf y}) = 0$ for ${\bf y} \in \mathbb{E}_n \setminus G$, where
$\textit{const}$ is a positive number, $i=1, \ldots, N$.

Let $f({\bf y})$ be a function defined for ${\bf y} \in
\mathbb{E}_n$, which is integrable in $\mathbb{E}_n$. The Radon
transform is defined by the following equality
\begin{equation}
    [R f]({\bm \omega}, p)=\int\limits_{{\bm \omega} \cdot {\bf y}=p} f({\bf y}) d_{\bf y} \sigma,
    \label{Radon}
\end{equation}
where the right-hand side contains a surface integral over the
hyperplane $Y({\bm \omega}, p)$,
see \cite{John:1955, Helgason:1999,  Natterer:1986}.

Using Fubini's theorem for any ${\bm\omega} \in \Omega$, we have
\begin{equation}
    \int_{G} f({\bf y}) d{\bf y}=\int\limits_{-\infty}^{+\infty} \int\limits_{{\bf y} \cdot {\bm \omega}=p} f({\bf y}) d_{\bf y} \sigma dp=\int\limits_{-\infty}^{+\infty}[R f]({\bm \omega}, p) dp.
    \label{f2}
\end{equation}

In this work the following problem is set    and investigated.

{\bf Problem:} Given the function $[Rf]({\bm \omega }, p)$ for
${\bm \omega } \in \Omega$, $p \in (-\infty, +\infty)$,
find $f({\bf y})$, ${\bf y} \in \mathbb{E}_n \backslash \partial G_0$, where $f \in F$.

\section{Usefull formulas for the Radon transform }
In the following, we will assume that $f \in F$.

Similarly to identities (\ref{f2}), for any continuous and local
integrable  function $\psi(s)$, $s \in (0, \infty)$, we write the
equalities
\begin{align}\nonumber
    &\int_{G} f({\bf y}) \psi  \Big( | ({\bf y}-{\bf x}) \cdot {\bm \omega }| \Big) d{\bf y}
    \\
		&=\int\limits_{-\infty}^{\infty} \int\limits_{({\bf y}-{\bf x})\cdot {\bm \omega }=p}
    f({\bf y}) \psi(|p|) d_{\bf y} \sigma d p= \int\limits_{-\infty}^{+\infty}
    \psi(|p|)[R f]({\bm \omega}, p + {\bf x}\cdot {\bm \omega }) dp.
    \label{f3}
\end{align}

We will need also the following identities
\begin{align}
    \int\limits_{\Omega}|{\bm \omega} \cdot {\bf x}| d{\bm \omega}& =\beta_{1,n}|{\bf x}|,  \quad {\bf x} \in \mathbb{E}_n,
    \label{f4}
    \\
    (\Delta_{\bf x})^{\frac{n-1}{2}}|{\bf y}-{\bf x}|& =\beta_{2, n}
    |{\bf y}-{\bf x}|^{2-n},  \quad n \geqslant 3,\, n=2m+1,
    \label{f5}
    \\
    \Delta_{\bf x} \int\limits_{G} \frac{f({\bf y})}
    {|{\bf y}-{\bf x}|^{n-2}} d{\bf y}& =\beta_{3,n} f({\bf x}), \  {\bf x} \in \mathbb{E}_n \backslash\partial G_{0},  \quad  n \geqslant 3,
    \label{f6}
\end{align}
where
$$
\beta_{1, n}=\frac{2 \pi^{\frac{n-1}{2} }}{\Gamma\left(\frac{n+1}{2}\right)}, \
\beta_{2, n} =(-1)^{\frac{n-1}{2}}\frac{2^{n} \Gamma(\frac{3}{2}) \Gamma\left(\frac{n+1}{2}\right) \Gamma\left(\frac{n}{2}\right)}{ \pi(2-n)}, \
\beta_{3, n} =\frac{2{\pi^{\frac{n}{2}}(2-n) } }{\Gamma\left(\frac{n}{2}\right)}.
$$
This  identities provided, for example, in \cite{John:1955}, formulas (1.6), (1.9a) and (1.8)--(1.9).

\subsection{Inversion of the Radon transform for odd $n=2m+1$}
\begin{theorem}
    \label{theorem_1}
    If $f \in F$ and $n=2m+1$, $m=1, \ldots$, then for all ${\bf x} \in \mathbb{E}_n \backslash \partial G_{0}$ the following equality holds
    \begin{equation}
        \beta_n f({\bf x}) = \left(\Delta_{{\bf x}}\right)^{\frac{n+1}{2}} \int\limits_{\Omega} \int\limits_{-\infty}^{+\infty} |p|[Rf]({\bm \omega}, p + {\bf x} \cdot {\bm \omega}) dp d{\bm \omega}, \quad \beta_{n} \neq 0.
        \label{f7}
    \end{equation}
\end{theorem}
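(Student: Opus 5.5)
The plan is to reduce the right-hand side of (\ref{f7}) to the Newtonian potential of $f$ and then apply the identities (\ref{f4})--(\ref{f6}) in turn.

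First, apply (\ref{f3}) with $\psi(s)=s$, which is continuous and locally integrable. For each ${\bm\omega}\in\Omega$ this gives
$$
\int\limits_{-\infty}^{+\infty}|p|\,[Rf]({\bm\omega},p+{\bf x}\cdot{\bm\omega})\,dp=\int_G f({\bf y})\,|({\bf y}-{\bf x})\cdot{\bm\omega}|\,d{\bf y}.
$$
Since $f$ is bounded on $G$ (it is Hölder on each $G_i$, and $G$ is bounded) and the integrand is bounded on $\Omega\times G$, Fubini's theorem lets me integrate over $\Omega$ first. By (\ref{f4}) this yields
$$
\int\limits_{\Omega}\int\limits_{-\infty}^{+\infty}|p|\,[Rf]({\bm\omega},p+{\bf x}\cdot{\bm\omega})\,dp\,d{\bm\omega}=\beta_{1,n}\int_G f({\bf y})\,|{\bf y}-{\bf x}|\,d{\bf y}=:\beta_{1,n}U({\bf x}).
$$

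Next, I split $(\Delta_{\bf x})^{\frac{n+1}{2}}=\Delta_{\bf x}(\Delta_{\bf x})^{\frac{n-1}{2}}$. I want to move $(\Delta_{\bf x})^{\frac{n-1}{2}}$ under the integral in $U$ and use (\ref{f5}) to get
$$
(\Delta_{\bf x})^{\frac{n-1}{2}}U({\bf x})=\beta_{2,n}\int_G \frac{f({\bf y})}{|{\bf y}-{\bf x}|^{n-2}}\,d{\bf y}.
$$
Then (\ref{f6}) gives $\beta_n f({\bf x})$ with $\beta_n=\beta_{1,n}\beta_{2,n}\beta_{3,n}$, valid for ${\bf x}\notin\partial G_0$. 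This constant is nonzero because each factor is, given $n\ge3$.

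The main obstacle is justifying differentiation under the integral sign when $f$ is only bounded and discontinuous. I plan to handle this through the orders of the derivatives. Any derivative $D^\gamma_{\bf x}|{\bf y}-{\bf x}|$ of order $k$ is $O(|{\bf y}-{\bf x}|^{1-k})$, which is locally integrable in ${\bf y}$ as long as $k\le n$. The order $n-1$ needed here satisfies this, so no smoothness of $f$ is required at this stage.

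I will argue inductively, one derivative at a time. If $K({\bf x},{\bf y})$ is homogeneous of degree $1-k$ in ${\bf y}-{\bf x}$ and $1-k>1-n$, then $\int_G f({\bf y})K({\bf x},{\bf y})\,d{\bf y}$ is $C^1$ and its derivative is obtained under the integral. This holds by the standard potential-theory lemma for weakly singular kernels: the derivative kernel is still integrable, and one can cut out a small ball and check uniform convergence. Boundedness of $f$ is enough for this.

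This reduces the whole argument to the final step (\ref{f6}), taken as given. That is exactly where the Hölder continuity on $G_i$ and the condition ${\bf x}\notin\partial G_0$ enter: near such ${\bf x}$, $f$ is Hölder on a ball inside some $G_i$, or $f$ vanishes identically near ${\bf x}$ outside $\overline G$.
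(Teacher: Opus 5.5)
Your proposal is correct and follows the paper's proof step for step. You use (\ref{f3}) to rewrite the Radon data as $\int_\Omega\int_G f({\bf y})|({\bf y}-{\bf x})\cdot{\bm\omega}|\,d{\bf y}\,d{\bm\omega}$, then apply (\ref{f4}), (\ref{f5}) and (\ref{f6}) in turn with $\beta_n=\beta_{1,n}\beta_{2,n}\beta_{3,n}$; your added justification for taking the first $n-1$ derivatives under the integral sign (kernels homogeneous of degree $>1-n$, bounded $f$) supplies a step the paper performs without comment.
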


{\bf Proof.} Consider the function
    \begin{equation}
        U_{1}({\bf x})=\int\limits_{\Omega} \int\limits_{G} f({\bf y})|({\bf y-}{\bf x}) \cdot {\bm \omega}| d{\bf y} d{\bm \omega}, \quad {\bf x} \in  \mathbb{E}_n \backslash \partial G_{0}.
        \label{f8}
    \end{equation}
    Using the identities (\ref{f3}) for the function $\psi(s)=|s|$, we
    obtain
    \begin{equation}
        U_{1}({\bf x}) = \int\limits_{\Omega} \int\limits_{-\infty}^{\infty} |p|[R f]({\bm \omega}, p+{\bf x}\cdot{\bm \omega}) dp d{\bm \omega}.
        \label{f9}
    \end{equation}
    Now, let's transform $U_{1}({\bf x})$ in a different way. Changing the
    order of integration in the right-hand side of equality (\ref{f8}) and
    considering the identity (\ref{f4}), we can write
    $$
    U_{1}({\bf x}) =
    \beta_{1,n} \int\limits_{G} f({\bf y})|{\bf y}-{\bf x}| d{\bf y}.
    $$
    Next, we apply the operator $(\Delta_{{\bf x}})^{\frac{n-1}{2}}$ to this equality, then using identity (\ref{f5}), we derive
    \begin{equation}
        \left(\Delta_{{\bf x}}\right)^{\frac{n-1}{2}} U_{1}({\bf x}) = \beta_{1, n}
        \beta_{2, n} \int\limits_{G} \frac{f({\bf y})}{|{\bf y}-{\bf x}|^{n-2}} d{\bf y}.
        \label{f10}
    \end{equation}
    Now, applying the operator $\Delta_{\bf x}$ to both sides of the
    latter equality and using property (\ref{f6}),  we obtain
    \begin{equation}
        \left(\Delta_{{\bf x}}\right)^{\frac{n+1}{2}} U_{1}({\bf x}) = \beta_{1, n}
        \beta_{2, n}  \beta_{3, n} f({\bf x}).
        \label{f10}
    \end{equation}
    Let $\beta_{n} = \beta_{1,n}  \beta_{2,n}  \beta_{3,n}$ in  (\ref{f10}) and using expression (\ref{f9})
    for function $U_{1}({\bf x})$, we  get equality (\ref{f7}). The theorem is proven.
$\square$

    It is worth clarifying that the presented proof is analogous to a segment of the proof of the
    inversion formula in \cite{John:1955}. In our notation, the formula in
    \cite{John:1955}  takes the form
    \begin{equation}
        2(2 \pi i)^{n-1} f({\bf x}) = \left(\Delta_{\bf x}\right)^{\frac{n-1}{2}} \int\limits_{\Omega} [Rf]({\bm \omega}, {\bm \omega} \cdot {\bf x}) d{\bm \omega}.
        \label{f12}
    \end{equation}

    Now let's compare formulas (\ref{f7}) and (\ref{f12}). We can see that equality (\ref{f12}) contains one
    less integral than  equality (\ref{f7}). The power of the Laplace operator in (\ref{f12}) is also smaller
    than in (\ref{f7}). In other words, finding the function $f({\bf x})$ using (\ref{f12}) is a simpler
    operation than a similar deduction using (\ref{f7}). However, equality (\ref{f12}) is
    proven for smooth functions, while equality (\ref{f7}) holds true for discontinuous functions as well.
    Thus, it can be said that (\ref{f7}) and
    (\ref{f12}) have their own merits and drawbacks.

\subsection {Inversion of the Radon transform for even $n=2m$}

Here we will use the following identities (see   formulas (1.7) and (1.9b) in \cite {John:1955}). Let $ {\bf x}, {\bf y}\in \mathbb{E}_n$, then
\begin{align}
    \int\limits_{\Omega} \ln |({\bf y-}{\bf x}) \cdot {\bm \omega}| d {\bm \omega }& =
    \gamma_{1, n}(\ln |{\bf y}-{\bf x}|+\gamma_{0, n}), \quad
    \  \quad \gamma_{0, n} \neq 0,
    \label{f13}
    \\
    \left(\Delta_{\bf x}\right)^{\frac{n-2}{2}} \ln |{\bf y}-{\bf x}|
    & = \gamma_{2, n}|{\bf y}-{\bf x}|^{2-n},
    \quad n \neq 2,
    \label{f14}
\end{align}
where
$$
\gamma_{1, n} = \frac{2\pi^{\frac{n}{2}}}{\Gamma\left(\frac{n}{2}\right)}, \
\gamma_{2, n} = (-1)^{\frac{n-2}{2}}
\frac{2^{n-2} \Gamma^{2}\left(\frac{n}{2}\right)}{2-n}.
$$

For $n=2$ we have
\begin{equation}
    \Delta_{\bf x} \int\limits_{G} f({\bf y}) \ln |{\bf y}-{\bf x}| d{\bf y} = 2\pi f({\bf x}), \quad {\bf x} \in \mathbb{E}_2 \backslash \partial G_{0}.
    \label{f15}
\end{equation}

\begin{theorem}
    \label{theorem_04}
    Let function  $f \in F$ and  $n=2m$, $m=1, \ldots$,   then for all ${\bf x} \in \mathbb{E}_n \backslash \partial G_{0}$ the next equality holds
    \begin{equation}
        \gamma_{n} f({\bf x}) = (\Delta_{\bf x})^{\frac{n}{2}} \int\limits_{\Omega} \int\limits_{-\infty}^{+\infty} \ln |p|[R f]({\bm \omega}, p+{\bf x}\cdot {\bm \omega})d p d {\bm \omega}, \quad \gamma_{n} \neq 0.
        \label{f16}
    \end{equation}
\end{theorem}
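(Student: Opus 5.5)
The proof will copy the scheme of Theorem~\ref{theorem_1}, with the kernel $|s|$ replaced by $\ln|s|$. Consider
\begin{equation*}
U_2({\bf x})=\int\limits_{\Omega}\int\limits_{G} f({\bf y})\ln|({\bf y}-{\bf x})\cdot{\bm\omega}|\,d{\bf y}\,d{\bm\omega},\quad {\bf x}\in\mathbb{E}_n\setminus\partial G_0 .
\end{equation*}
Applying (\ref{f3}) with $\psi(s)=\ln|s|$ identifies $U_2({\bf x})$ with the double integral on the right-hand side of (\ref{f16}). Strictly, $\ln s$ is not continuous at $s=0$, but it is locally integrable there. Since $f$ is bounded with compact support, the coarea/Fubini argument behind (\ref{f2}) still applies and justifies (\ref{f3}) for this $\psi$.

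Next, change the order of integration and use (\ref{f13}):
\begin{equation*}
U_2({\bf x})=\gamma_{1,n}\int\limits_G f({\bf y})\ln|{\bf y}-{\bf x}|\,d{\bf y}+\gamma_{1,n}\gamma_{0,n}\int\limits_G f({\bf y})\,d{\bf y}.
\end{equation*}
The second term is a constant in ${\bf x}$, so any positive power of $\Delta_{\bf x}$ kills it. The Fubini swap is legitimate because $\int_\Omega\big|\ln|{\bf z}\cdot{\bm\omega}|\big|\,d{\bm\omega}$ is finite and locally bounded in ${\bf z}$ away from ${\bf z}=0$, and it has only a logarithmic singularity at ${\bf z}=0$. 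Both are integrable against a bounded $f$ on a bounded $G$.

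Now split into cases. For $n=2$ we apply $\Delta_{\bf x}$ once and invoke (\ref{f15}), which gives $\gamma_2=2\pi\gamma_{1,2}$. For $n\ge4$ we first apply $(\Delta_{\bf x})^{\frac{n-2}{2}}$ and use (\ref{f14}) under the integral sign. This yields $\gamma_{1,n}\gamma_{2,n}\int_G f({\bf y})|{\bf y}-{\bf x}|^{2-n}d{\bf y}$. One more application of $\Delta_{\bf x}$, together with (\ref{f6}), gives
\begin{equation*}
\gamma_n=\gamma_{1,n}\gamma_{2,n}\beta_{3,n}\neq0 .
\end{equation*}

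The main obstacle is differentiating under the integral sign in the step that uses (\ref{f14}). The kernel $\ln|{\bf y}-{\bf x}|$ and its derivatives up to order $n-3$ are locally integrable, with singularity of order $|{\bf y}-{\bf x}|^{-(n-3)}$ at worst, so derivatives of order at most $n-3$ may be moved inside the integral. The last two derivatives of $(\Delta_{\bf x})^{\frac{n-2}{2}}$ can be handled by the same argument used for the Newtonian potential: derivatives of order $n-2$ of the kernel behave like $|{\bf y}-{\bf x}|^{2-n}$, which is still integrable. At that stage one obtains exactly the potential in (\ref{f6}). The final second derivatives, where the Hölder continuity of $f$ near ${\bf x}\notin\partial G_0$ is needed, are exactly the content of (\ref{f6}) and (\ref{f15}), which are assumed from the paper. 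I would justify the intermediate steps by writing the derivatives as distributional ones and noting that the resulting functions are continuous.
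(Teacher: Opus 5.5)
Your proposal is correct and follows the paper's proof essentially step for step: the same auxiliary function $U_2$, identity (\ref{f3}) with a logarithmic $\psi$, the reduction via (\ref{f13}) with the constant term annihilated by the Laplacian, and the same case split, namely (\ref{f15}) for $n=2$ and (\ref{f14}) followed by (\ref{f6}) for $n\ge 4$, giving the same constant $\gamma_n$. Your extra remarks on the Fubini swap and on moving the order-$(n-2)$ derivatives under the integral sign fill in justifications the paper leaves implicit; the latter is legitimate because derivatives of $\ln|{\bf z}|$ of order at most $n-2$ are locally integrable and have no singular part at the origin.
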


{\bf Proof}
 Let us define the function
    \begin{equation}
        U_{2}({\bf x}) = \int\limits_{G} \int\limits_{\Omega} f({\bf y}) \ln |({\bf y}-{\bf x}) \cdot {\bm \omega})| d {\bm \omega} d{\bf y}, \quad {\bf x} \in \mathbb{E}_n \backslash \partial G_{0}.
        \label{f17}
    \end{equation}
    We change the order of integration in the right-hand side of the
    last equality and use identities (\ref{f3}) with $\psi(s)= \ln \,
    s .$ Then we obtain
    \begin{align}
        U_{2}({\bf x}) = \int\limits_{\Omega} \int\limits_{G} f({\bf y}) \ln |({\bf y-}{\bf x})
        \cdot {\bm \omega}| d{\bf y} d {\bm \omega} = \int\limits_{\Omega} \int\limits_{-\infty}^{\infty}
        \ln |p|[R f]({\bm \omega}, p+{\bf x} \cdot {\bm \omega}) dp d {\bm \omega}.
        \label{f18}
    \end{align}
    Also we can  give another representation  for the same function
    $U_2({\bf x})$. For this we  transform the inner integral in
    equality  (\ref{f17}) by using identity (\ref{f13}), we get
    \begin{equation}
        U_{2}({\bf x})=\int\limits_{G} f({\bf y})\left(\gamma_{1, n} \ln |{\bf y}-{\bf x}|+\gamma_{0, n}\right) d {\bf y}.
        \label{f19}
    \end{equation}
    Next, the cases $n=2 $ and $  n>2$ are considered separately. For  $n>2$ we apply the operator $(\Delta_{\bf x})^{\frac{n-2}{2}}$
    to equality (\ref{f19}) and using identity (\ref{f14}), we obtain
    \begin{equation}
        \left(\Delta_{\bf x}\right)^{\frac{n-2}{2}} U_{2}({\bf x})=\gamma_{1, n} \gamma_{2, n} \int\limits_{G} \frac{f({\bf y})}{ |{\bf y}-{\bf x}|^{n-2}} d{\bf y}.
        \label{8}
    \end{equation}
    Applying now the operator $\Delta_{\bf x}$ to both sides of the above equality and using property (\ref{f6}), we obtain
    \begin{equation}
        \left(\Delta_{\bf x}\right)^{\frac{n}{2}} U_{2}({\bf x})=\gamma_{1, n}  \gamma_{2, n}
        \gamma_{3, n} f({\bf x}),  \ \gamma_{3, n}=\beta_{3, n}.
        \label{f21}
    \end{equation}
    Let's denote  $\gamma_{n}=\gamma_{1, n}  \gamma_{2, n} \gamma_{3, n}$ and compare equalities
    (\ref{f18}) and (\ref{f21}). As a result, we have the identity (\ref{f16}).

    Now, let  $n=2$.  Applying the operator $\Delta_{{\bf x}}$ to (\ref{f19}) 
		and using identity (\ref{f15}), we write
    \begin{equation}
        \Delta_{\bf x}U_{2}({\bf x})= 2 \pi\gamma_{1,2}  f({\bf x}).
        \label{f22}
    \end{equation}
    Let's denote $ 2 \pi\gamma_{1,2}=\gamma_{2}$ and compare (\ref{f18}) and (\ref{f22}). As a result,
    we obtain the equality (\ref{f16}) for the particular case $n=2$. The theorem is proven.
$\square$
\begin{equation}
    (2 \pi i)^{n} f({\bf x})=(\Delta_{\bf x} )^{\frac{n-2}{2}}
    \int\limits_{\Omega} \int\limits_{-\infty}^{\infty} \ln |p-{\bf x} \cdot {\bm \omega}| \frac{\partial^{2}}{\partial p^{2}}[R f]({\bm \omega}, p)dp d{\bm \omega}.
    \label{f23}
\end{equation}

Let's compare the inversion formulas (\ref{f16}) and (\ref{f23}).
Each of them involves the same order of integration and
differentiation. However, to evaluate  the equality (\ref{f23}),
it is necessary to assume that the function $f({\bf y})$ has
continuous and bounded derivatives up to second order. But for the
formula (\ref{f16}) it is sufficient to require that $f({\bf y})$
is piecewise continuous in the Holder sense. Hence, we can
conclude that formula (\ref{f16}) is more valuable for
applications than (\ref{f23}).
\section{Conclusion}

To the accepted restrictions in this article we add the condition
that  the $n$-dimensional Lebesgue measure of the set $\partial G_0$ is equal to zero. This condition is used in mathematical physics very rarely, but here it allows us to assert that formulas
(\ref{f7}) and (\ref{f16}) are true for almost all ${\bf x}\in
E_n$. In general, identities (\ref{f7}) and (\ref{f16}) can be
considered as the theoretical basis for the numerical algorithm for finding $f({\bf x})$. By using the discrete set for the problem, the values of $f({\bf x})$ can be approximately determined. Also the set $\partial G_0$ is approximately founded as a set of points at which the function $f({\bf x})$ has an anomalously large jumps.


\begin{thebibliography}{10}
\setlength{\parsep}{0pt}\setlength{\itemsep}{3pt}

\bibitem{Courant:1962}
Courant~R., \emph{Partial differential equations}, Interscience, France,
1962.

\bibitem{John:1955}
John~F.,
\emph{Plane waves and spherical means}, New York: Intersci. Publ., 1955.


\bibitem{Helgason:1999}
Helgason~S., Helgason, \emph{ Radon Transform}, Birkhauser, Basel, Switzerland, 2nd ed.,
1999.


\bibitem{Natterer:1986}
Natterer~F.,
\emph{The Mathematics of Computerized Tomography}, Wiley, Hoboken, NJ, USA, 1986.

\bibitem{K:2020}
Katsevich~A.,
Analysis of reconstruction from discrete
Radon transform data in $R^3$ when the function has jump
discontinuities, 
\emph {SIAM Journal on Mathematical Analysis}
\textbf{52}(4)  (2020),
3990--4021.


\bibitem{OZCC:2024}
Olugboji~T., Zhang~Z., Carr~S., Cetin~C.,
On the detection of upper mantle discontinuities with
radon-transformed receiver functions,
\emph{Geophys. J. Int.} 
\textbf{236} (2024),
748--763.


\bibitem{AKK:2023}
Anikonov~D.\,S.,   Kazantsev~S.\,G., Konovalova~D.\,S.,
A uniqueness result for the inverse problem of identifying boundaries from weighted Radon transform,
\emph{Journal of Inverse
and Ill-Posed Problems} \textbf{31} (2023),
959--965.

\bibitem{AK:2024}
Anikonov~D.\,S.,  Konovalova~D.\,S.,
Radon transform inversion formula in the class of discontinuous functions,
\emph{J. Appl. Ind. Math.}
\textbf{8} (2023), 379--383.

\bibitem{AKo:2024}
Anikonov~D.\,S.,  Konovalova~D.\,S., Inversion problem for Radon
transforms defined on pseudoconvex sets, 
\emph{ Dokl. Math.}
\textbf{109} (2024),  175--178.

\bibitem{L:2024}
Louis~A.\,K., A unified approach to inversion formulae for vector and tensor ray and Radon transforms and the Natterer inequality,
\emph {Inverse Problems}
\textbf{40}(8) (2024), 085007.


\bibitem{Ky:2025}
Kunyansky~L.,  McDugald~E.,  Shearer~B.,
Weighted Radon
transforms of vector fields, with applications to
magnetoacoustoelectric tomography, 
\emph {Inverse Problems},
\textbf{39} (2023), 065014.
\end{thebibliography}
\end{document}